\newcommand{\ass}{\mathrel{:=}}     % assignment operator
\newcommand{\reserved}[1]{\textbf{#1}} 
\newcommand{\DO}{\reserved{do}}
\newcommand{\OD}{\reserved{end~while}}
\newcommand{\WHILE}{\reserved{while}}
\let\set\mathbbm
\newtheorem*{remark}{Remark}
\begin{document}
\title[Invariant Generation with Hypergeometric~Sequences]{Automated
  Generation of Non-Linear Loop Invariants Utilizing Hypergeometric~Sequences}
% \titlenote{Produces the permission block, and
%   copyright information}
% \subtitle{Extended Abstract}
% \subtitlenote{The full version of the author's guide is available as
%   \texttt{acmart.pdf} document}

\author{Andreas Humenberger, Maximilian Jaroschek, Laura Kov{\'{a}}cs}
\authornote{All authors are supported by the ERC Starting Grant 2014
  SYMCAR 639270. We also acknowledge funding from the Wallenberg Academy Fellowship 2014 TheProSE,
the Swedish VR grant GenPro D0497701, and the Austrian FWF
research project RiSE S11409-N23.}
\affiliation{%
  \institution{Technische Universität Wien}
  \department{Institut für Informationssysteme 184}
  \streetaddress{Favoritenstraße 9--11}
  \city{Vienna} 
  \postcode{A--1040}
  \country{Austria}
}
\email{ahumenbe@forsyte.at}
\email{maximilian@mjaroschek.com}
\email{lkovacs@forsyte.at}

% The default list of authors is too long for headers}
\renewcommand{\shortauthors}{A. Humenberger, M. Jaroschek, L. Kov{\'{a}}cs}

\begin{abstract}
  Analyzing and reasoning about safety properties of software systems becomes an
  especially challenging task for programs with complex flow and, in particular,
  with loops or recursion. For such programs one needs additional information,
  for example in the form of loop invariants, expressing properties to hold at
  intermediate program points.  In this paper we study program loops with
  non-trivial arithmetic, implementing addition and multiplication among numeric
  program variables. We present a new approach for automatically generating all
  polynomial invariants of a class of such programs. Our approach turns programs
  into linear ordinary recurrence equations and computes closed form solutions
  of these equations. The computed closed forms express the most precise
  inductive property, and hence invariant. We apply Gr\"obner basis computation
  to compute a basis of the polynomial invariant ideal, yielding thus a finite
  representation of all polynomial invariants.  Our work significantly extends
  the class of so-called P-solvable loops by handling multiplication with the
  loop counter variable.  We implemented our method in the Mathematica package
  Aligator and showcase the practical use of our approach.
\end{abstract}

%
% The code below should be generated by the tool at
% http://dl.acm.org/ccs.cfm
% Please copy and paste the code instead of the example below. 
%

 \begin{CCSXML}
<ccs2012>
<concept>
<concept_id>10003752.10003790.10003794</concept_id>
<concept_desc>Theory of computation~Automated reasoning</concept_desc>
<concept_significance>300</concept_significance>
</concept>
<concept>
<concept_id>10003752.10010124.10010138.10010139</concept_id>
<concept_desc>Theory of computation~Invariants</concept_desc>
<concept_significance>500</concept_significance>
</concept>
<concept>
<concept_id>10003752.10010124.10010138.10010142</concept_id>
<concept_desc>Theory of computation~Program verification</concept_desc>
<concept_significance>300</concept_significance>
</concept>
<concept>
<concept_id>10002950.10003624</concept_id>
<concept_desc>Mathematics of computing~Discrete mathematics</concept_desc>
<concept_significance>300</concept_significance>
</concept>
</ccs2012>
\end{CCSXML}

\ccsdesc[500]{Theory of computation~Invariants}
\ccsdesc[300]{Theory of computation~Automated reasoning}
\ccsdesc[300]{Theory of computation~Program verification}
\ccsdesc[300]{Mathematics of computing~Discrete mathematics}
% We no longer use \terms command
%\terms{Theory}

\keywords{program analysis, loop invariants, recurrence relations,
  hypergeometric sequences}

\maketitle

\section{Introduction}
\subsection{Overview}
\label{sec:intro}
Analysis and verification of software systems requires non-trivial
automation. Automatic generation of program properties describing safety and/or
liveness is a key step to such automation, in particular in the presence of
program loops (or recursion). For programs with loops one needs additional
information, in the form of loop invariants or conditions on ranking functions.

In this paper we focus on loop invariant generation for programs with
assignments implementing numeric computations over scalar variables. Our
programming model extends the class of so-called P-solvable loops. Our work is
based on and extends results of~\cite{kapur,laura}, in particular it relies on
the fact that the set of polynomial invariants of P-solvable loops form a
polynomial ideal and we employ reasoning about C-finite and hypergeometric
sequences to determine algebraic dependencies. We show how to compute the ideal
of polynomial invariants of extended P-solvable loops as follows: we model
programs as a system of recurrence equations and compute closed form sequence
solutions of these recurrences. If these sequences are of a certain type, which
includes, among others, polynomials, rational functions, exponential and
factorial sequences, then we compute a set of generators of the polynomial
invariant ideal via Gröbner bases. We implemented our approach in the
Mathematica package Aligator~\cite{aligator} that is able to compute polynomial
loop invariants for programs that, to the best of our knowledge, no other
approach is able to handle.

This paper is organized as follows. In Section~\ref{sec:preliminaries}, we state
basic definitions and facts about the algebra of linear ordinary recurrence
operators as well as C-finite and hypergeometric sequences. We also give a
precise definition of the programming model we take into consideration,
particularly the notion of imperative loops with assignment statements
only. This is followed by a description of the class of P-solvable loops and its
reach and limitations in Section~\ref{sec:psolvable}. In
Section~\ref{sec:extension} we present our main contribution, an extension of
P-solvable loops by reasoning about hypergeometric sequences and we derive the
necessary theoretical and algorithmical results to offer fully automated
polynomial invariant generation therein. We conclude the paper with a
presentation of our implementation in the Mathematica package Aligator in
Section~\ref{sec:aligator} and a summary of possible future research directions
in Section~\ref{sec:conclusion}.

\subsection{Related Work}\label{sec:related}

Many classical data flow analysis problems, such as constant propagation and
finding definite equalities among program variables, can be seen as problems
about polynomial identities expressing loop invariants.  In \cite{olm, san} a
method built upon linear and polynomial algebra is developed for computing
polynomial equalities of a bounded degree.  A related approach was also proposed
by~\cite{Carbonell07a} using abstract interpretation. Abstract interpretation is
also used in~\cite{farzan,oliveira} for computing polynomial invariants of
programs whose assignments can be described by C-finite recurrences. In our work
we do not rely on abstract interpretation but use algebraic reasoning about
holonomic sequences. For program loops with assignments only, our technique can
handle programs with more complex arithmetic than the previously mentioned
methods. Our work is currently restricted though to single-path loops.

Without an a priori fixed polynomial degree, in~\cite{kapur} the polynomial
invariant ideal is approximated by a fixed point procedure based on polynomial
algebra and abstract interpretation.  In~\cite{laura}, the author defines the
notion of P-solvable loops which strictly generalizes the programming model
of~\cite{kapur}. Given a P-solvable loop with assignments and nested
conditionals, the results in~\cite{laura} yield an automatic approach for
computing all polynomial loop invariants.  Our work extends~\cite{laura,kapur}
in new ways: it handles a richer class of P-solvable loops where multiplication
with the loop counter is allowed. Our technique relies on manipulating
hypergeometric sequences and relaxes the algebraic restrictions
of~\cite{laura,kapur} on program operations. To the best of our knowledge, no
other method is able to derive polynomial invariants for extended P-solvable
loops.  Unlike~\cite{laura,kapur}, we however only treat loops with assignments;
that is, invariants for extended P-solvable loops with conditionals are not yet
treated by our approach.

\section{Preliminaries}\label{sec:preliminaries}
In this section we give a brief overview of the algebra of linear ordinary
recurrence operators as well as C-finite and hypergeometric sequences that we use
further on. We also describe our programming model in detail.

\subsection{Recurrence Operators and Holonomic Sequences}
\label{sec:ore}
Let $\set K$ be a computable field of characteristic
zero. 

The algebra of linear ordinary recurrence operators in one
variable will serve as the algebraic foundation to deal with recurrence
equations. For details on general Ore algebras, see~\cite{bronstein,ore}. 

\begin{definition}
\label{def:ore}
Let $\set K(x)[S]$ be the set of univariate polynomials in the variable $S$ over
the set of rational functions $\set K(x)$ in $x$ and let
$\sigma\colon \set K(x)\rightarrow \set K(x)$ be the forward shift
operator in $x$, i.e.\
$\sigma(r(x))=r(x+1)$ for $r(x)\in\set K(x)$. We define the \textit{Ore
  polynomial ring of ordinary recurrence operators} $(\set K(x)[S],+,\cdot)$
with component-wise addition and the unique distributive and associative
extension of the multiplication rule
\begin{displaymath}
 S a = \sigma(a)S\text{\qquad for all }a\in\set K(x),
\end{displaymath}
to arbitrary polynomials in $\set K(x)[S]$. To clearly distinguish this ring
from the commutative polynomial ring over $\set K(x)$, we denote it by
$\set K(x)[S;\sigma,0]$. The {\it order} of an operator $L\in \set K(x)[S;\sigma,0]$
is its degree in $S$.
\end{definition}
Without loss of generality, we assume that the leading coefficient of any
operator $L\in\set K(x)[S;\sigma,0]$ is equal to 1. Otherwise, we can divide by
the leading coefficient of $L$ from the left. $\set K(x)[S;\sigma,0]$ is a right
Euclidean domain, i.e.\ we have the notion of the greatest common right divisor
and the least common left multiple of operators and we are able to determine
both algorithmically. Consequently, $\set K(x)[S;\sigma,0]$ is a principal left
ideal domain and every left ideal is generated by the greatest common right
divisor of a given set of generators.

Consider the ring $\set K^\set N$ of all sequences in $\set K$ with
component-wise addition and the Hadamard product (i.e.\ component-wise product)
as multiplication. We follow~\cite{zeilberger} in identifying sequences as equal
if they only differ in finitely many terms. This will prove beneficial in two
ways. Firstly, it allows us to define the action of operators on sequences in a
natural way. Secondly, disregarding finitely many starting values makes it
possible to identify unnecessary loop variables, whose values are eventually
equal to the values of another variable, and therefore can be computed outside
of any while loop. Let $\sim$ be the equivalence relation on $\set K^\set N$
defined by
\[s\sim t:\Leftrightarrow s-t\text{ has finitely many non-zero elements}.\] We
then set $\mathcal S$ to be the quotient ring $\set K^\set N/\sim$.
Subsequently, it will not be necessary to distinguish between
$t\in\set K^\set N$ and $\pi(t)\in\mathcal S$, where
$\pi:\set K^\set N\rightarrow \mathcal{S}$ is the canonical homomorphism. The
field~$\set K$ can be embedded in $\mathcal S$ via the map
$c\mapsto (c)_{n\in\set N}$.  The action of an operator in
$\set K(x)[S;\sigma,0]$ on an element in $\mathcal S$ is defined by the map
\begin{align*}
  & \tau: \set K(x)[S;\sigma,0]\times \mathcal S \rightarrow \mathcal S\notag \\
  & \tau(L(S,x),t)(n)=\tau\biggl(\sum_{i=0}^{d}l_i(x)S^i,t\biggr)(n) :=
  \sum_{i=0}^{d}l_i(n)t(n+i),
\end{align*}
where the evaluation is well defined for all $n\geq n_0$ for some
$n_0\in\set N$, and we set $L(t):=\tau(L,t)\in\mathcal S$.  If $L(t)\equiv0$,
then we say that $L$ is an \textit{annihilator} of $t$ ($L$ \textit{annihilates}
$t$) and $t$ is a \textit{solution} of $L(t)=0$. A sequence that is annihilated
by a non-zero operator in $\set K(x)[S;\sigma,0]$ is called {\it holonomic
  sequence}. For a given sequence $t$, the set of all its annihilators forms a
left ideal in $\set K(x)[S;\sigma,0]$. We call it the {\it annihilator ideal} of
$t$ and denote it by $\operatorname{ann}(t)$.

\begin{example}
\label{ex:poly}
Let $p(x)$ be a polynomial in $\set K[x]$. The polynomial sequence
$(p(n))_{n\in\set N}$ is annihilated by the operator \[L_1=S-\frac{p(x+1)}{p(x)}.\]
$L_1$ is a generator of the annihilator ideal of $p$.  Set $\Delta:=S-1$. Then
$\tilde{p}=\Delta(p)$ is again a polynomial sequence with
$\deg(\tilde{p})<\deg(p)$. It follows that $L_2=\Delta^{\deg(p)+1}$ is another
annihilator of $p$ in $\set K(x)[S;\sigma,0]$ and its coefficients are
independent of $x$. Since $L_1$ generates $\operatorname{ann}(p)$, there exists
an operator $Q$ with $L_2=QL_1$.
\end{example}

In our work, we focus on two different special kinds of holonomic sequences:\goodbreak
\begin{definition}
\label{def:hg}
Let $t\in \mathcal{S}$. Then 
\begin{itemize}
\item $t$ is called \textit{C-finite} if it is annihilated by an operator in $\set
  K(x)[S;\sigma,0]$ with only constant coefficients. $(l_i\in \set K$)
\item $t$ is called \textit{hypergeometric} if it is
annihilated by an order 1 operator in $\set K(x)[S;\sigma,0]$.
\end{itemize}
\end{definition}

\begin{example}
\label{ex:seq}
We give some examples of commonly encountered sequences.
 \begin{itemize}
 \item As was shown in Example~\ref{ex:poly}, polynomial sequences are both,
   C-finite and hypergeometric.
 \item Rational function sequences $(r(n))_{n\in\set N}$, $r\in\set
   K(x)\setminus\set K[x]$, are hypergeometric but not C-finite.
 \item The factorial sequence $(n!)_{n\in\set N}$ is hypergeometric but not
   C-finite.
 \item The Fibonacci sequence $(f(n))_{n\in\set N}$
   with
   \[f(n)=\frac1{\sqrt{5}}\left(\left(\frac{1+\sqrt{5}}2\right)^{\mathrlap{n}} -
       \left(\frac{1-\sqrt{5}}2\right)^n\right),\]
   is C-finite but not hypergeometric.
 \item The sequence of harmonic numbers $(h(n))_{n\in\set N}$
   with \[h(n)=\sum_{i=1}^n{\frac1i},\] is neither hypergeometric nor C-finite.
 \end{itemize}
\end{example}

In a sufficiently large algebraic field extension $\overline{\set K}/\set K$, every
C-finite sequence $(c(n))_{n\in\set N}$ can be uniquely written (up to
reordering) in the form
\[c(n)=p_1(n)\theta_1^n +p_2(n)\theta_2^n+\dots+p_s(n)\theta_s^n,\]
for some $s\in\set N$ and $p_i\in\set K[x], \theta_i\in\overline{\set K}$ for
$i=1,\dots,s$ with $\theta_i\neq\theta_j$ for $i\neq j$.  For any
$r\in\set K(x)$ and $n\in\set N$, $r(x)^{\underline n}$ is defined as
$\prod_{i=0}^{n-1}r(x-i)$.  Then every hypergeometric sequence
$(h(n))_{n\in\set N}$ can be uniquely written (up to reordering) in the form
\[h(n) = \theta^n r(n)((n+\zeta_1)^{\underline n})^{k_1}((n+\zeta_2)^{\underline
    n})^{k_2}\cdots ((n+\zeta_\ell)^{\underline n})^{k_\ell},\]
for some $\ell\in\set N$, $r(x)\in\set K(x)$, $\theta\in\overline{\set K}$,
$\zeta_i\in\overline{\set K}$ and $k_i\in\set Z$ for $i=1,\dots,\ell$, and the
difference $\zeta_i-\zeta_j$ is not an integer for $i\neq j$. From these closed
forms it is immediate that finite sums and products of C-finite sequences are
again C-finite and finite products of hypergeometric sequences are again
hypergeometric. Sums of hypergeometric sequences are not necessarily
hypergeometric, see Lemma~\ref{lem:hgsum}. Subsequently, we will assume that
$\set K$ is large enough so that all occurring C-finite and hypergeometric
sequences have a closed form representation in $\set K$.

For more details on C-finite and
hypergeometric sequences, as well as proofs for the facts given in this section,
see~\cite{kauers}.

For functions $f_1,\dots,f_m: \set U\rightarrow \set K$ with
$\set N\subset\set U\subset\set K$ that are algebraically independent over
$\set K$, we distinguish between the polynomial ring $\set K[f_1,\dots,f_m]$,
where $f_1,\dots,f_m$ are used as variables, and the ring
$\set K[f_1(n),\dots,f_m(n)]\subset \mathcal S$ of all sequences
$(t(n))_{n\in\set N}$ of the form $t(n)=p(f_1(n),\dots,f_m(n))$ with
$p\in\set K[f_1,\dots,f_m]$. This distinction is important, as e.g.\ the
function $\sin(x\cdot\pi)$ is algebraically independent over $\set K$, but the
sequence $(\sin(n\cdot\pi))_{n\in\set N}=(0,0,0,\dots)$ is not, and thus
$\set K[\sin(n\cdot\pi)]$ is isomorphic to $\set K$, but
$\set K[\sin(x\cdot\pi)]$ is not.

\begin{remark}
  In the context of this paper, since the operators in question emerge from
  program loops, we can safely assume that the rational function coefficients of
  any operator do not have poles in $\set N$. Otherwise, a division by zero
  error would occur for some program input. 
\end{remark}

\subsection{Programming Model}
\label{sec:model}

We consider a simple programming model of single-path loops with rational
function assignments. That is, nested loops and/or loops with
conditionals are not yet handled in our work. Our programming model is
thus given by the following loop pattern, written in a C-like syntax: 

\begin{equation}\label{eq:loop}
    \begin{tabular}{ll}
      \WHILE\ $pred(v_1,\dots,v_m)$\ \DO\\
      \quad $\phantom{v_m}\mathllap{v_1\;}\ass f_1(v_1,\dots,v_m)$;\\
      \quad\quad\vdots\\
      \quad $v_m\ass f_m(v_1,\dots,v_m);$\\
      \OD\\
    \end{tabular}
\end{equation}

\noindent where $v_1,\dots,v_m$ are (scalar) variables with values from
$\set K$, the $f_i$ are rational functions over $\set K$ in $m$ variables and
$pred$ is a a Boolean formula (loop condition) over $v_1,\dots,v_m$. In our
approach however we ignore loop conditions and treat program loops as
non-deterministic programs. In~\cite{olm}, it is shown that the set of all
affine equality invariants is not computable if the programming model includes
affine equality tests/conditions. With this consideration, our programming model
from~\eqref{eq:loop} becomes:

\begin{equation}\label{eq:prgModel}
    \begin{tabular}{ll}
      \WHILE\ true\ \DO\\
      \quad\quad\vdots\\
      \OD\\
    \end{tabular}
\end{equation}

\noindent Due to particular importance in our reasoning, we suppose that there
is always a variable $n$ denoting the loop iteration counter. The initial value
of $n$ will always be $n=0$ and $n$ will be incremented by $1$ at the end of
each iteration.

Each program variable gives rise to a sequence $(v_i(n))_{n\in\set N}$. For a
program variable $v$, we allow ourselves to abuse the notation and also use the
identifier $v$ as a variable in polynomial rings as well as an identifier for
the sequence $(v(n))_{n\in\set N}$. 

A polynomial loop invariant is a non-zero polynomial $p$ over $\set K$ in $m$
variables such that $p(v_1(n),\dots,v_m(n))=0$ for all $n$. As observed
in~\cite{kapur,laura}, the set of all polynomial invariants forms a polynomial
ideal in $\set K[v_1,\dots,v_m]$, called the {\it polynomial invariant ideal}
and is denoted by $I(v_1,\dots,v_m)$. For a subset
$\{\tilde{v}_1,\dots,\tilde{v}_k\}\subset\{v_1,\dots,v_m\}$, we define
\[I(\tilde{v}_1,\dots,\tilde{v}_k) = I(v_1,\dots,v_m)\cap\set
  K[\tilde{v}_1,\dots,\tilde{v}_k].\]

In general, polynomial loop invariants depend on the initial values of program
variables. To simplify the presentation, we fix $\set K$ to be
\[\set K=\set F(v_{1,0},\dots,v_{1,k},v_{2,0},\dots,v_{m,\ell}),\]
for a computable field $\set F$ of characteristic zero that allows us to
represent all occurring C-finite and hypergeometric sequences in closed form, and
sufficiently many variables $v_{1,0},\dots,v_{m,\ell}$ that represent the
initial values of the program variables $v_1,\dots,v_m$.

\section{Polynomial Invariants for P-Solvable Loops}
\label{sec:psolvable}
We now turn our attention to the class of P-solvable loops introduced
in~\cite{laura} that allows for computing all polynomial loop invariants..

\begin{definition}
\label{def:psolvable}
An imperative loop with assignment statements only is called \textit{P-solvable}
if the sequence of each recursively changed program variable $v$ is C-finite and
the ideal of all polynomial invariants over $\set K$ is not the zero ideal.
\end{definition}

\begin{example}
  In~\cite{laura}, it is shown that the Euclidean algorithm is P-solvable. Given
  the program:

 \vspace{0.2cm}
    \begin{tabular}{ll}
     \WHILE\ $y \leq rem$\ \DO\\
      \quad $rem \ass rem - y$; \\
      \quad $quo \ass quo + 1$; \\
      \OD
    \end{tabular}

    \vspace{0.12cm} \noindent The  ideal
    of polynomial loop invariants is shown to be 
    \[I(quo,rem,x,y)=\langle rem + quo\cdot y - y\cdot quo(0) -
    rem(0)\rangle.\] With $quo(0) = 0$ and $rem(0) = x$, this gives
    $\langle rem + quo\cdot y - x\rangle$.
\end{example}

While P-solvable loops cover a wide class of program loops, there are several
significant cases which do not fall into this class. Notably, multiplication
with the loop counter $n$ will generally result in loops that are not
P-solvable.

\begin{example}
\label{ex:mainex1}
Consider the following loop with relevant loop variables $a,b,c,d$. The
variables $t_1,t_2$ are temporary variables used to access previous values of
$a$. Along with the loop counter $n$, we will not take them into consideration for
the loop invariants in this example.

\vspace{0.2cm}
    \begin{tabular}{ll}
      \WHILE\ true\ \DO\\
      \quad $t_1\ass t_2;\quad t_2\ass a$;\\
      \quad $a\ass  5(n+2)\cdot t_2 + 6\cdot (n^2+3\cdot n+2)\cdot t_1$;\\ 
      \quad $b\ass  2\cdot b$;\\ 
      \quad $c\ass  3\cdot (n+2)\cdot c$;\\ 
      \quad $d\ass  (n+2)\cdot d$;\\ 
      \quad $n\ass n+1$;\\
      \OD\\
    \end{tabular}

\vspace{0.12cm}
\noindent The program then satisfies the following system of recurrences:
\begin{equation*}
\begin{cases}
a(n+2)-5(n+2)\cdot a(n+1)-6(n^2+3n+2)\cdot a(n)=0\\
b(n+1)-2\cdot b(n)=0\\
c(n+1)-3(n+1)\cdot c(n)=0\\
d(n+1)-(n+1)\cdot d(n)=0.
\end{cases}
\end{equation*}
This loop is not P-solvable as, for example, the variable $c$ is
updated by a sequence that is not C-finite (due to the multiplication
between the program variables $n$ and $c$). To the best of our
knowledge, none of the  existing invariant generation techniques is
able to to compute polynomial invariants for this loop. 
In the next section, we extend the class of P-solvable loops, covering also
programs as the one above, and introduce an automated approach to
derive all polynomial invariants of such loops. 
\end{example}

\section{Extension of P-Solvable Loops}
\label{sec:extension}

\subsection{Definition of Extended P-Solvable Loops}
\label{sec:defext}

Consider the sequences $(v_1(n))_{n\in\set N},\dots, (v_m(n))_{n\in\set N}$ with
values in $\set K$ given by
\begin{equation}
\label{eq:psolvable}
v_i(n)=\sum_{k\in\set
  Z^\ell}p_{i,k}(n,\theta_1^n,\dots,\theta_s^n)((n+\zeta_1)^{\underline n})^{k_1}\cdots
((n+\zeta_\ell)^{\underline n})^{k_\ell}
\end{equation}
where $s,\ell\in\set N$, the $p_{i,k}$ are polynomials in
$\set K(x)[y_1,\dots,y_s]$, not identically zero for finitely many
$k\in\set Z^\ell$, and the $\theta_i$ and $\zeta_j$ are elements of $\set K$ for
$i=1,\dots,s$, $j=1,\dots,\ell$ with $\theta_i\neq\theta_j$ and
$\zeta_i-\zeta_j\notin\set Z$ for $i\neq j$.

In particular, this class of sequences comprises C-finite sequences as well as
hypergeometric sequences and Hadamard products of C-finite and hypergeometric
sequences, which could not be handled in automated invariant generation
before. We give an extension of Definition~\ref{def:psolvable} based on this
class of sequences

\begin{definition}
\label{def:extended}
An imperative loop with assignment statements only is called \textit{extended
  P-solvable} if the sequence of each recursively changed program variable $v$
is of the form~\eqref{eq:psolvable}.
\end{definition}

Note that in Definition~\ref{def:extended}, we drop the requirement of
Definition~\ref{def:psolvable} that the ideal of algebraic relations is not the
zero ideal. This change is just for convenience.

While it is obvious that the inclusion of hypergeometric terms in extended
P-solvable loops allows assignments of the form $v\ass r(n)v$, where $r$ is a
rational function in $\set K[x]$, it also allows assignments that turn into
higher order recurrences, as illustrated in Example~\ref{ex:mainex2}. It also
allows for assignments of the form $v_2\ass r(v_1)v_2$, with $r\in\set K(x)$, as
long as the closed form of $v_1$ is a rational function in $n$.
 
\subsection{Detecting Extended P-Solvable Loops}
In order to employ the ideas we develop in Section~\ref{sec:relations} for
finding algebraic relations in extended P-solvable loops, we have to be able to
detect sequences of the form~\eqref{eq:psolvable}.  This means, given a
recurrence operator $R$ of order $d$ and starting values $s_0,\dots,s_{d-1}$,
compute, if possible, $p_k$,$\theta_i$ and $\zeta_j$ as in~\eqref{eq:psolvable}
such that $v$ is a solution of $R(v)=0$ with $v(n)=s_n$ for
$n\in\{0,\dots,d-1\}$.  We can write $v$ as a sum of hypergeometric sequences:
\begin{alignat*}2
  &v(n)  = h_1(n) + \dots + h_w(n),\text{ where}\\
  & h_i(n) =q_i(n)\tilde{\theta}_i^n((n+\zeta_1)^{\underline n})^{k_{i,1}}\cdots
  ((n+\zeta_\ell)^{\underline n})^{k_{i,\ell}},
\end{alignat*}
with $q_i\in\set K(x)$, $\tilde{\theta}_i\in\set K$, and $k_i\in\set Z^\ell$.
Note that we use $\tilde{\theta}_i$ instead of $\theta_i$ since the exponential
sequence for each summand can be a product of several $\theta_i^n$.  We can
assume without loss of generality that the $h_i$ are linearly independent over
$\set K(n)$. In fact, if $h_1(n)= r_2(n)h_2(n)+\dots+r_w(n)h_w(n)$, we can set
$\tilde{h}_1=(1+r_2)h_2,\dots,\tilde{h}_{w-1}=(1+r_w)h_w$ and get
$v(n)=\tilde{h}_1(n) + \dots + \tilde{h}_{w-1}(n)$. Let $L$ be the least common
left multiple of the first order operators $L_1,\dots,L_w$ that annihilate
$h_1,\dots,h_w$ respectively in the Ore algebra $\set K(x)[S;\sigma,0]$ and let
$G$ be a generator of $\operatorname{ann}(v)$. We show that $G$ and $L$ are
equal. (Note that we required all operators to have leading coefficient $1$.)

\noindent By right division with remainder, we can write $G$ as
\begin{alignat*}2
G =&\;Q_1L_1+r_1\\
 =&\;Q_2L_2+r_2\\
 &\vdots\\
 =&\;Q_wL_w+r_w,
\end{alignat*}
with $Q_1,\dots,Q_w\in\set K(x)[S;\sigma,0]$ and some $r_1,\dots,r_w\in
\set K(x)$. We then get
\[0=G(v)=G(h_1+\dots+h_w)=G(h_1)+\dots+G(h_w)=r_1h_1+\dots r_wh_w.\]
Since the $h_i$ are linearly independent, we have $r_1=\dots=r_w=0$, and
so, $L_1,\dots,L_w$ are right factors of $G$. This proves the claim.

Since every annihilator of $v$ is a multiple of $G$ and therefore also an
annihilator of $h_i$, we can use Petkov\v{s}ek's algorithm~\cite{pet} to
determine $p_k$,$\theta_i$ and $\zeta_j$ as in~\eqref{eq:psolvable}. More
precisely, given an operator $R\in\set K(x)[S;\sigma,0]$ of order $d$ and
starting values $s_0,\dots,s_{d-1}$, we compute $v$ as in~\eqref{eq:psolvable}
such that $R(v)=0$ (if possible), by computing all hypergeometric solutions of
$R$. This gives $\theta_i,\zeta_i$ and $p_i$, linearly dependent on parameters
$c_1,\dots,c_w$. Next, we solve the linear system $v(i)=s_i$ in terms of
$c_i$. Any solution then gives rise to a sequence $(v(n))_{n\in\set N}$ with the
desired properties.

\begin{example}
\label{ex:mainex2}
For the recurrence for $a$ in Example~\ref{ex:mainex1}, we compute two
hypergeometric solutions using Petkov\v{s}ek's algorithm:
\[h_1=(-1)^nn!,\quad h_2=6^nn!\]
Thus, we get 
\[a(n)=(k_1(-1)^n+k_26^n)n!\]
with the relations $a(0)=k_1+k_2$ and $a(1)=6k_2-k_1$ stemming from the starting
values of $a$.  Since $b,c,d$ are given by first order recurrences, their closed
forms can be easily computed:
\[b(n)=2^nb(0),\quad c(n)=3^nn!c(0),\quad d(n)=n!d(0).\]
It follows that the program loop given in Example~\ref{ex:mainex1} is extended
P-solvable.
\end{example}
 \subsection{The Ideal of Algebraic Relations}
\label{sec:relations}
We now turn to the problem of, given sequences $v_1,\dots,v_m$ as
in~\eqref{eq:psolvable}, how to compute a basis for the ideal
$I(v_1,\dots,v_m)$ of all algebraic relations among the $v_i$. We proceed by
identifying the terms $(n+\zeta_i)^{\underline n}$ that are algebraically
independent over $\set K(n,\theta_1^n,\dots,\theta_s^n)$. For this, we use basic
properties of sums and products of hypergeometric terms. First, we state a
necessary condition for a finite sum of hypergeometric terms to be again
hypergeometric.
\begin{lemma}
\label{lem:hgsum}
  Let $h_1,\dots,h_w$ be hypergeometric sequences. If the
  sum $h_1+\dots+h_w$ is hypergeometric, then there exist integers
  $i,j\in\{1,\dots,w\}$, $i\neq j$, and a rational function $r(x)\in\set K(x)$ such that
  $h_i(n)=r(n)h_j(n)$.
\end{lemma}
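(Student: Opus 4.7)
The plan is to proceed by contradiction. Suppose $S = h_1 + \dots + h_w$ is hypergeometric yet no pair $(h_i, h_j)$ with $i \neq j$ satisfies $h_i(n) = r(n)h_j(n)$ for any $r\in\set K(x)$; call such a family \emph{pairwise dissimilar}. The goal is to derive a contradiction, which in particular forces $w \geq 2$.

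First I would translate every hypothesis into first-order recurrences. Each $h_i$, being hypergeometric, satisfies $h_i(n+1) = a_i(n)\,h_i(n)$ for some $a_i\in\set K(x)$, and $S$ satisfies $S(n+1) = b(n)\,S(n)$ for some $b\in\set K(x)$. Equating $\sum_i h_i(n+1)$ with $b(n)\sum_i h_i(n)$ gives the identity
\[
\sum_{i=1}^{w}\bigl(a_i(n)-b(n)\bigr)\,h_i(n) \;=\; 0.
\]
Now split into two cases. If $a_i(n)=b(n)$ for every $i$, then all $h_i$ satisfy the same order-$1$ operator $S-b$, hence any two of them differ by a multiplicative constant (their Casoratian is constant), contradicting pairwise dissimilarity once $w\geq 2$. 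Otherwise, the identity above is a nontrivial $\set K(x)$-linear dependence among the $h_i$.

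The key step is therefore to rule out any nontrivial $\set K(x)$-linear dependence among pairwise dissimilar hypergeometric sequences. I would prove this sublemma by taking a dependence $\sum_{i=1}^{k} r_i(n)\,h_i(n)=0$ with $r_i\neq 0$ and with support size $k$ minimal; one checks $k\geq 2$, since a single term $r_j h_j\equiv 0$ forces $r_j\equiv 0$ as $h_j$ has only finitely many zero entries under the equivalence $\sim$ of Section~\ref{sec:ore}. Applying the shift gives $\sum_i r_i(n+1) a_i(n)\,h_i(n)=0$. Forming the combination $r_k(n+1)a_k(n)\cdot[\text{original}] - r_k(n)\cdot[\text{shifted}]$ cancels the $h_k$-term, yielding a shorter relation whose coefficients must all vanish by minimality. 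The resulting identity $r_i(n+1)a_i(n)/r_i(n)=r_k(n+1)a_k(n)/r_k(n)$ implies that the sequence $f(n) := r_i(n)h_i(n)/\bigl(r_k(n)h_k(n)\bigr)$ satisfies $f(n+1)=f(n)$, so $f$ is constant. Hence $h_i/h_k\in\set K(x)$, contradicting pairwise dissimilarity and finishing the proof.

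The main obstacle is exactly this $\set K(x)$-linear independence of pairwise dissimilar hypergeometric sequences; every other step is routine bookkeeping. Care is needed with the quotient $\mathcal{S} = \set K^{\set N}/\!\sim$, to ensure that rational-function coefficients that vanish on all but finitely many indices are identically zero, and that poles of the $a_i$ and $b$ do not obstruct the manipulations, but both concerns are addressed by the standing assumption recorded in the Remark at the end of Section~\ref{sec:ore}.
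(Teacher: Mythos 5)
Your proof is correct, but it takes a genuinely different route from the paper. The paper argues by induction on $w$: with $r_h$ the shift quotient of the sum and $r_i$ those of the summands, it derives $\sum_i (r_i(n)-r_h(n))h_i(n)=0$, then either moves the $w$-th summand to the other side (if all coefficients are nonzero, the right-hand side is hypergeometric and the induction hypothesis applies to the remaining $w-1$ terms) or, if some coefficient vanishes, applies the induction hypothesis to the shorter sum directly. You instead isolate a stronger sublemma --- hypergeometric sequences that are pairwise non-similar (no quotient lies in $\set K(x)$) are linearly independent over $\set K(x)$ --- and prove it by a minimal-support relation plus shift elimination: subtracting a suitable multiple of the shifted relation cancels one term, minimality forces the remaining coefficients to vanish identically, and the resulting equality of shift quotients makes $r_i(n)h_i(n)/(r_k(n)h_k(n))$ eventually constant, i.e.\ $h_i/h_k\in\set K(x)$; the lemma then follows by contradiction from $\sum_i(a_i(n)-b(n))h_i(n)=0$ after disposing of the degenerate case $a_i=b$ for all $i$. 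Your computations check out (the $i=k$ coefficient indeed cancels, and the derived identity does make $f$ a fixed point of the shift), so the argument stands. What your route buys is a reusable classical fact: it is essentially the independence statement the paper itself invokes tacitly in Section 4.2 when it assumes the $h_i$ there are linearly independent over $\set K(n)$, and it shows that any $\set K(x)$-dependence can only arise through similarity; the cost is the minimality bookkeeping, whereas the paper's induction is shorter and self-contained. Both arguments share the same terseness about edge cases: the lemma is only meaningful for $w\ge 2$ (for $w=1$ the conclusion, which requires $i\neq j$, cannot hold, so the paper's ``nothing to show'' and your ``forces $w\ge 2$'' gloss over the same point), and both silently exclude the zero sequence --- under your pairwise-dissimilarity hypothesis this is automatic for $w\ge 2$, since $0=0\cdot h_j$ is itself a similarity, which is exactly what justifies your claim that $r_jh_j\equiv 0$ with $h_j$ having finitely many zero entries forces $r_j\equiv 0$. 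One cosmetic point: you use $S$ both for the sum $h_1+\dots+h_w$ and, implicitly, for the shift operator in ``$S-b$''; rename one of them to avoid a clash with the paper's notation.
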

\begin{proof}
  We prove the claim by induction on $w$. For the case $w=1$, there is nothing
  to show. Now suppose the claim holds for some $(w-1)\in\set N^*$. There
  is a rational function $r_h(x)\in\set K(x)$ such that
 \[\sum_{i=1}^w h_i(n+1)=r_h(n)\sum_{i=1}^w h_i(n).\]
 Let $r_i\in\set K(x)$ be such that $h_i(n+1)=r_i(n)h_i(n)$. We then get
 \begin{equation}
 \label{eq:hgsum}
 \sum_{i=1}^w (r_i(n)-r_h(n))h_i(n)=0.
 \end{equation}
 We first treat the case in which for all $i$, $(r_i(x)-r_h(x))$ is not zero.
 Then, bringing $(r_w(n)-r_h(n))h_w(n)$ in~\eqref{eq:hgsum} to the other
 side yields
 \[\sum_{i=1}^{w-1}(r_i(n)-r_h(n))h_i(n)=(r_w(n)-r_h(n))h_w(n).\]
 The sequence $(r_w(n)-r_h(n))h_w(n)$ is hypergeometric, and by the induction
 hypothesis it follows that there are $i,j$ and a rational function $\tilde{r}$
 with $(r_i(n)-r_h(n))h_i(n)= \tilde{r}(n)(r_j(n)-r_h(n))h_j(n)$.
 Dividing by $r_i(n)-r_h(n)$ proves the claim.
 For the case that there is an $i$ with $(r_i(x)-r_h(x))=0$, the left hand side
 of~\eqref{eq:hgsum} is a sum of fewer than~$w$ hypergeometric terms and the
 right hand side is hypergeometric. The induction hypothesis then again yields
 suitable $i,j$ and $r(x)$.
\end{proof}

\begin{example}
  The sums $2n!+(n+3)!$ and $n! + (n+\frac12)^{\underline n} -n!$ are
  hypergeometric, whereas $1+n!$ is not.
\end{example}

The next lemma gives a characterization of when the quotient of two
hypergeometric sequences is a rational function sequence. Together with
Lemma~\ref{lem:hgsum}, this then will yield the algebraic independence of
certain hypergeometric sequences in Lemma~\ref{lem:trans}.

\begin{lemma}
\label{lem:rat}
Let $\zeta_1,\dots,\zeta_\ell\in\set K$ be such that for all $i,j=1,\dots,\ell$
with $i\neq j$, we have $\zeta_i-\zeta_j\notin\set Z$. Then for
$k_1,\dots,k_\ell\in\set N$, $c_1,\dots,c_\ell\in\set N$, and
$\theta_1,\theta_2\in\set K$, there is a rational
function $r(x)\in\set K(x)$ such that
\begin{alignat*}1 & \theta_1^n\cdot((n-\zeta_1)^{\underline n})^{k_1}\cdots
  ((n-\zeta_\ell)^{\underline n})^{k_\ell}
  = \\
  &\quad r(n)\cdot\theta_2^n\cdot((n-\zeta_1)^{\underline n})^{c_1}\cdots
  ((n-\zeta_\ell)^{\underline n})^{c_\ell},
\end{alignat*}
  if and only if $\theta_1=\theta_2$ and $(k_1,\dots,k_\ell)=(c_1,\dots,c_\ell)$.
\end{lemma}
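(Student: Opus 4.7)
The plan is to dispatch the \emph{if} direction immediately with $r(n) = 1$, and to reduce the \emph{only if} direction to showing that if
\[h(n) := \theta^n \prod_{i=1}^\ell ((n-\zeta_i)^{\underline n})^{d_i} = r(n) \quad \text{in } \mathcal{S},\]
with $\theta := \theta_1/\theta_2 \in \set K$, $d_i := k_i - c_i \in \set Z$, and $r\in\set K(x)$, then $\theta = 1$ and every $d_i = 0$. Dividing the two sides of the original equation by the non-rational factors on one side produces exactly this reduction.

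Next I would compute the shift quotient of $h$. Since $(n-\zeta_i)^{\underline n} = (n-\zeta_i)(n-1-\zeta_i)\cdots(1-\zeta_i)$, passing from $n$ to $n+1$ merely appends the factor $n+1-\zeta_i$, so if $P_i(n) := (n-\zeta_i)^{\underline n}$ then $P_i(n+1)/P_i(n) = n+1-\zeta_i$. Hence
\[\frac{h(n+1)}{h(n)} = \theta \prod_{i=1}^\ell (n+1-\zeta_i)^{d_i}\]
as an element of $\set K(n)$. Because $h$ agrees with $r$ on a cofinite set, the same rational function must equal $r(n+1)/r(n)$.

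The main step is then to read off $\theta$ and the $d_i$ from this identity via the divisor of $r$. The plan is to factor $r(x) = c \prod_{\alpha \in \overline{\set K}} (x-\alpha)^{m_\alpha}$ with finitely many nonzero integer multiplicities; a direct check gives that the multiplicity of $r(x+1)/r(x)$ at any $\beta \in \overline{\set K}$ equals $m_{\beta+1} - m_\beta$. Matching exponents against $\theta \prod_i (x+1-\zeta_i)^{d_i}$ forces $m_{\beta+1} - m_\beta = d_i$ when $\beta = \zeta_i - 1$ and $0$ otherwise. Restricting to the shift class of $\zeta_i$, which by the hypothesis $\zeta_i - \zeta_j \notin \set Z$ meets no $\zeta_j$ with $j \neq i$, the map $k \mapsto m_{\zeta_i + k}$ on $\set Z$ has finite support, yet all its consecutive differences vanish except one, which equals $d_i$; telescoping therefore gives $d_i = 0$. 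The same argument on shift classes disjoint from every $\{\zeta_i\}$ kills the remaining $m_\beta$, so $r$ is a nonzero constant, and the leftover equation $r(n+1)/r(n) = \theta$ then forces $\theta = 1$.

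The hard part will be the shift-class bookkeeping: one must argue carefully that the divisor formula $m_{\beta+1} - m_\beta$ really captures the multiplicities of $r(x+1)/r(x)$ and that the telescoping identity on an infinite shift class genuinely forces $d_i = 0$. The hypothesis $\zeta_i - \zeta_j \notin \set Z$ for $i \neq j$ is precisely what keeps the contributions of different $d_i$ from mixing across classes. A more concise alternative would be to invoke the uniqueness (up to reordering) of the canonical hypergeometric representation stated in Section~\ref{sec:ore}, applied to both sides of the equation, which essentially packages this whole analysis and collapses the proof to matching canonical forms.
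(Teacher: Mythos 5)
Your proof is correct and follows essentially the same route as the paper: both directions reduce to examining the shift quotient of the quotient term and exploiting that $\zeta_i-\zeta_j\notin\set Z$ prevents roots and poles from cancelling across integer shifts. The only difference is cosmetic: where the paper cites the criterion that a hypergeometric term is rational iff its shift quotient has the form $g(x)f(x+1)/(g(x+1)f(x))$ and argues via roots in integer distance, you prove the same fact directly by the divisor-multiplicity telescoping argument on shift classes, which is a sound (and self-contained) way to fill in that cited step.
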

\begin{proof}
  If $\theta_1=\theta_2$ and $(k_1,\dots,k_\ell)=(c_1,\dots,c_\ell)$, then we
  can set $r(x)=1$. For the other direction, we have
  \[\underbrace{\left(\frac{\theta_1}{\theta_2}\right)^{\mathrlap{n}\hspace{1px}}
      ((n-\zeta_1)^{\underline n})^{k_1-c_1}\cdots 
      ((n-\zeta_\ell)^{\underline n})^{k_\ell-c_\ell}}_{\text{hypergeometric}}= r(n).\]
  A hypergeometric term $h$ is a rational function if and only if its shift
  quotient $h(x+1)/h(x)$ can be written in the
  form \[q(x)=\frac{g(x)f(x+1)}{g(x+1)f(x)},\]
  with $f,g\in\set K[x]$. Therefore, for any root in the numerator of $q(x)$
  there is a root in integer distance in the denominator of $q(x)$, which, by
  the condition on the $\zeta_i$, is not possible if $\theta_1\neq\theta_2$
  or $(k_1,\dots,k_\ell)\neq(c_1,\dots,c_\ell)$
\end{proof}
\begin{lemma}
\label{lem:trans}
Let $\theta_1,\dots,\theta_s\in\set K$ and $\zeta_1,\dots,\zeta_\ell\in\set K$.
The sequences
$(n+\zeta_1)^{\underline n},(n+\zeta_2)^{\underline
  n},\dots,(n+\zeta_\ell)^{\underline n}$
are algebraically independent over $\set K(n,\theta_1^n,\dots,\theta_s^n)$ if
and only if there are no $i,j\in\{1,\dots,\ell\}$, $i\neq j$ such that
$\zeta_i-\zeta_j\in\set Z$.
\end{lemma}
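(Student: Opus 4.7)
\end{lemma}

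\textbf{Proof plan.}
I would prove the two implications separately, starting with the easier contrapositive of the forward direction. Suppose some $\zeta_i - \zeta_j = k\in\set Z$, without loss of generality $k\geq 0$. The defining product for the falling factorial telescopes: writing $(n+\zeta_i)^{\underline n} = \prod_{m=0}^{n-1}(n+\zeta_j+k-m)$ and comparing with $(n+\zeta_j)^{\underline n} = \prod_{m=0}^{n-1}(n+\zeta_j-m)$, the common factors $\zeta_j+k+1,\ldots,n+\zeta_j$ cancel and leave
\begin{displaymath}
(n+\zeta_i)^{\underline n} \;=\; \frac{(n+\zeta_j+1)(n+\zeta_j+2)\cdots(n+\zeta_j+k)}{(\zeta_j+1)(\zeta_j+2)\cdots(\zeta_j+k)}\,(n+\zeta_j)^{\underline n}.
\end{displaymath}
This is a nonzero polynomial relation with coefficients in $\set K(n)$, witnessing algebraic dependence over $\set K(n,\theta_1^n,\dots,\theta_s^n)$.

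For the other direction I would argue by contradiction. Assume $\zeta_a-\zeta_b\notin\set Z$ whenever $a\neq b$, but that there is a nonzero polynomial $P\in\set K(n,\theta_1^n,\dots,\theta_s^n)[x_1,\dots,x_\ell]$ with $P((n+\zeta_1)^{\underline n},\dots,(n+\zeta_\ell)^{\underline n}) = 0$. After clearing denominators and multiplying out the $\theta_i^n$-monomials, the relation takes the form
\begin{displaymath}
\sum_{k=1}^{w} r_k(n)\,\vartheta_k^n\,\prod_{a=1}^{\ell}\bigl((n+\zeta_a)^{\underline n}\bigr)^{\alpha_{k,a}} \;=\; 0,
\end{displaymath}
with nonzero $r_k\in\set K(n)$, each $\vartheta_k\in\set K$ a monomial in $\theta_1,\dots,\theta_s$, and each $\alpha_k\in\set N^{\ell}$. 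After gathering like terms I may further assume that the pairs $(\vartheta_k,\alpha_k)$ are pairwise distinct, while $w\geq 1$ follows from the nontriviality of $P$.

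Each summand $t_k$ in this sum is hypergeometric. Rewriting $t_1+\cdots+t_{w-1} = -t_w$ exhibits a sum of $w-1$ hypergeometrics equal to a hypergeometric. If $w\geq 3$, Lemma~\ref{lem:hgsum} yields distinct indices $p,q\in\{1,\dots,w-1\}$ with $t_p = r(n)\,t_q$ for some $r(n)\in\set K(n)$; if $w=2$, then directly $t_1 = -t_2$; and $w=1$ is impossible because a single nonzero summand of this shape is not the zero sequence. In every case two distinct summands are $\set K(n)$-rational multiples of one another. Lemma~\ref{lem:rat}, whose hypothesis is precisely $\zeta_a-\zeta_b\notin\set Z$ for $a\neq b$, then forces $\vartheta_p=\vartheta_q$ and $\alpha_p=\alpha_q$, contradicting the distinctness of the pairs $(\vartheta_k,\alpha_k)$.

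The main obstacle is the normalization step: starting from an arbitrary polynomial relation, one has to carefully clear denominators and group monomials in the $\theta_i^n$ together with monomials in the falling factorials so that, after combining like terms, the pairs $(\vartheta_k,\alpha_k)$ really are distinct and each $r_k(n)$ is a single rational function in $n$; without this, one cannot hope to exploit Lemma~\ref{lem:rat} sharply. A secondary subtlety is that Lemma~\ref{lem:hgsum} is stated for sums being hypergeometric rather than for sums vanishing, but isolating $-t_w$ on the right bridges this gap via the short case split on $w$ used above.
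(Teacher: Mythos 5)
Your proposal is correct and follows essentially the same route as the paper: the explicit telescoping relation for the direction where some $\zeta_i-\zeta_j\in\set Z$, and for the converse, clearing denominators into a sum of hypergeometric terms with distinct pairs $(\vartheta_k,\alpha_k)$, then combining Lemma~\ref{lem:hgsum} with Lemma~\ref{lem:rat} to reach a contradiction. If anything, your case split on $w$ (isolating $-t_w$ so that Lemma~\ref{lem:hgsum} is applied to a genuinely hypergeometric sum) is slightly more careful than the paper's direct invocation of the lemma on a vanishing sum.
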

\begin{proof}
  If there are $i,j\in\{1,\dots,\ell\}$, $i\neq j$ with
  $\zeta_i-\zeta_j=k\in\set Z$, then we get the algebraic relation
  \[(n+\zeta_i)^{\underline n}\cdot\prod_{w=1}^{k}(\zeta_j-w) =
    (n+\zeta_j)^{\underline n}\cdot\prod_{w=1}^{k}(n+w+\zeta_j).\]
  Conversely, let $p$ be a nonzero polynomial over
  $\set K(n,\theta_1^n,\dots,\theta_s^n)$ in~$\ell$ variables.  We can write
  $\operatorname{denominator}(p)\cdot p((n+\zeta_1)^{\underline n},\dots,(n+\zeta_\ell)^{\underline n})$ as a sum
  of the form
\[\sum_{i\in\set N,k\in\set
    Z^\ell}p_{i,k}(n)\tilde{\theta}_i^n((n+\zeta_1)^{\underline
    n})^{k_1}\cdots ((n+\zeta_\ell)^{\underline n})^{k_\ell}\]
Assume that
$p((n+\zeta_1)^{\underline n},\dots,(n+\zeta_\ell)^{\underline n})=0.$ Then, by
Lemma~\ref{lem:hgsum}, there have to be terms $(i,k),(j,c)\in\set N\times\set Z^\ell$, $(i,k)\neq (j,c)$
and a rational function $r(x)\in\set K(x)$ with
\begin{alignat*}1
& p_{i,k}(n)\tilde{\theta}_i^n((n-\zeta_1)^{\underline n})^{k_1}\cdots 
  ((n-\zeta_\ell)^{\underline n})^{k_\ell} =\\
&\quad r(n)p_{j,c}(n)\tilde{\theta}_j^n((n-\zeta_1)^{\underline
  n})^{c_1} \cdots ((n-\zeta_\ell)^{\underline n})^{c_\ell},
\end{alignat*}
By Lemma~\ref{lem:rat}, this can only be the case if there are
$\zeta_i\neq\zeta_j$ in integer distance, which contradicts the condition on the
$\zeta_i$.
\end{proof}

\begin{example}
 Let $h_1,h_2,h_3$ be hypergeometric sequences given by $h_1(0)=h_2(0)=h_3(0)=1$
 and 
 \begin{align*}
 & h_1(n+1)=(n^2+\frac32 n + \frac12)h_1(n),\;
   h_2(n+1)=(n+1)h_2(n),\\
 & h_3(n+1)=\frac{2n^3 + 9n^2 + 10n + 3}{2n+4}h_3(n).
\end{align*}
 The closed forms then are
 \begin{align*}
   &h_1(n) =\prod_{i=0}^n(i^2+\frac32 i + \frac12) = \prod_{i=0}^n(i+1)(i+\frac12) =
     (n+1)^{\underline{n}}(n+\frac12)^{\underline{n}},\\
   &h_2(n) =\prod_{i=0}^n(i+1) = (n+1)^{\underline{n}},\\
   &h_3(n) =\prod_{i=0}^n\frac{2i^3 + 9i^2 + 10i + 3}{2i+4} = \prod_{i=0}^n
     \frac{(i+1)(i+\frac12)(2(i+1)+4)}{2i+4} =\\
   &\hspace{30px}(2n+4)(n+1)^{\underline{n}}(n+\frac12)^{\underline{n}}.
\end{align*}
 From Lemma~\ref{lem:trans} it follows that $h_1,h_2$ are algebraically
 independent over $\set K$, but $h_1,h_3$ are not.
\end{example}

Lemma~\ref{lem:trans} allows us to represent the sequences arising in extended
P-solvable loops as rational function sequences over the field
$\set K(n,\theta_1^n,\dots,\theta_s^n)$ as follows: Let $v_1,\dots,v_m$ be of
the form~\eqref{eq:psolvable} and let
$\tilde{Z}=\{\tilde{\zeta}_1,\dots\tilde{\zeta}_k\}$ be a subset of
$Z=\{\zeta_1,\dots,\zeta_\ell\}$ such that there are no $i,j=1,\dots,k$,
$i\neq j$, with $\tilde{\zeta}_i-\tilde{\zeta}_j\in\set Z$ and for each
$\zeta\in Z\setminus\tilde{Z}$ there exists an $i$ such that
$\tilde{\zeta}_i-\zeta\in\set Z$. Let
$z_1,\dots,z_\ell\in\set K[x,y_1,\dots,y_k]$ be such that
\[z_i(n,(n-\tilde{\zeta}_1)^{\underline n},\dots,(n-\tilde{\zeta}_k)^{\underline
    n}) = (n-\zeta_i)^{\underline n},\]
for all $n\in\set N$ and $i=1,\dots,\ell$.  Then there exist
$k_1,\dots,k_m\in\set Z^\ell$ with
\begin{equation*}
\begin{alignedat}2 v_i(n)= &\sum_{j\in\set
    Z}p_{i,j}(n,\theta_1^n,\dots,\theta_s^n)\cdot{}\\
  &\prod_{\mathclap{1\leq w \leq \ell}}z_w(n,(n-\tilde{\zeta}_1)^{\underline
    n},\dots,(n-\tilde{\zeta}_k)^{\underline n})^{k_{i,w}}.
\end{alignedat}
\end{equation*}
Substituting variables $v_i$ for $v_i(n)$, $h_i$ for
$(n-\tilde{\zeta}_i)^{\underline n}$, $e_i$ for $\theta_i^n$ and $x$ for $n$
then gives
\[\left[v_i = r_i(x,e_1,\dots,e_s,h_1,\dots,h_k)\right]_{\begin{subarray}{l}
      v_i\rightarrow v_i(n), h_i\rightarrow (n-\tilde{\zeta}_i)^{\underline
        n},\\e_i\rightarrow \theta_i^n, x\rightarrow n\end{subarray}}\]
where $r_{i}$ is a rational function over $\set K$ in $1+s+k$ variables. We
now can compute the ideal of all algebraic dependencies among the program
variables of a P-solvable loop as the ideal of algebraic relations among
rational functions.

\begin{proposition}
\label{prop:mainprop}
Let $(v_1(n))_{n\in\set N},\dots,(v_m(n))_{n\in\set N}$ be sequences of the
form~\eqref{eq:psolvable} and consider the corresponding rational functions
$r_1,\dots,r_m$ in $\set K(x,e_1,\dots,e_s,h_1,\dots,h_k)$ as above. For
each $i=1,\dots,m$, write $r_i=f_i/g_i$ with coprime polynomials $f_i,g_i$
over $\set K$. Denote by $I(\theta_1^n,\dots,\theta_s^n)$ the ideal of algebraic
relations among $\theta_1^n,\dots,\theta_s^n$ in $\set K[e_1,\dots,e_s]$. Then
the ideal of algebraic relations among the sequences
$(v_1(n))_{n\in\set N},\dots,(v_m(n))_{n\in\set N}$ in $\set K[v_1\dots,v_m]$
is given by
\begin{alignat*}2
  I(v_1,\dots,v_m) ={}&(I(\theta_1^n,\dots,\theta_s^n) +&{}\\
  &\langle g_1v_1-f_1,\dots,g_m v_m-f_m \rangle )\cap \set
  K[v_1,\dots,v_m].
\end{alignat*}
\end{proposition}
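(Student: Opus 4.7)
The claim is an implicitization statement: each $v_i(n)$ is obtained from the rational function $r_i = f_i/g_i$ by the substitution $x\mapsto n$, $e_j\mapsto\theta_j^n$, $h_j\mapsto(n-\tilde\zeta_j)^{\underline n}$, so the ideal of algebraic relations among the $v_i(n)$ ought to be computable by eliminating $x, e_1, \dots, e_s, h_1, \dots, h_k$ from the graph ideal $\langle g_iv_i-f_i\rangle$ together with the relations among the $\theta_j^n$. The formal tool is the evaluation homomorphism
\[
\Psi : \set K[v_1,\dots,v_m,x,e_1,\dots,e_s,h_1,\dots,h_k]\longrightarrow\mathcal S
\]
sending $v_i\mapsto v_i(n)$, $x\mapsto n$, $e_j\mapsto\theta_j^n$, and $h_j\mapsto(n-\tilde\zeta_j)^{\underline n}$. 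Denoting by $J$ the ideal on the right-hand side of the proposed equality (viewed inside the same polynomial ring), the strategy is to show that $J = \ker\Psi$, so that after restricting to $\set K[v_1,\dots,v_m]$ the stated equality follows because $\Psi|_{\set K[v_1,\dots,v_m]}$ has kernel exactly $I(v_1,\dots,v_m)$.

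The inclusion $J\subseteq\ker\Psi$ is immediate on generators: elements of $I(\theta_1^n,\dots,\theta_s^n)$ vanish under $\Psi$ by definition, and each $g_iv_i-f_i$ vanishes because $v_i(n)g_i(n,\theta^n,\dots)=f_i(n,\theta^n,\dots)$ as sequences, by the derivation preceding the proposition. For the reverse inclusion, given $q\in\ker\Psi$ I would repeatedly apply $g_iv_i\equiv f_i\pmod{\langle g_jv_j-f_j\rangle}$ to eliminate the $v_i$, producing some $\tilde q\in\set K[x,e,h]$ and $N\in\set N$ with $g^Nq\equiv\tilde q\pmod{\langle g_jv_j-f_j\rangle}$, where $g:=g_1\cdots g_m$. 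Then $\tilde q$ lies in $\ker\Psi_0$, the restriction of $\Psi$ to $\set K[x,e,h]$. The crucial algebraic step, and the point at which Lemma~\ref{lem:trans} enters, is to prove
\[
\ker\Psi_0 = I(\theta_1^n,\dots,\theta_s^n)\cdot\set K[x,e_1,\dots,e_s,h_1,\dots,h_k].
\]
Expanding $\tilde q$ as a polynomial in $h_1,\dots,h_k$ over $\set K[x,e]$, Lemma~\ref{lem:trans} supplies algebraic independence of the $(n-\tilde\zeta_j)^{\underline n}$ over $\set K(n,\theta_1^n,\dots,\theta_s^n)$, forcing every coefficient to vanish under $\Psi_0$; transcendence of $n$ over $\set K$ then places each coefficient in $I(\theta_1^n,\dots,\theta_s^n)$. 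Hence $g^Nq\in J$.

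The last step, cancelling the factor $g^N$ to deduce $q\in J$, is where I expect the main obstacle: this is a standard saturation issue, and in general $g^Nq\in J$ does not imply $q\in J$ without further input. I would resolve it via the remark following Definition~\ref{def:ore}: because the $g_i$ come from closed forms of program variables and no program input may cause division by zero, the sequence $g(n,\theta_1^n,\dots,(n-\tilde\zeta_j)^{\underline n},\dots)$ is not the zero element of $\mathcal S$, so multiplication by $g$ is injective modulo $J$. This lets me pass from $g^Nq\in J$ to $q\in J$, and intersecting with $\set K[v_1,\dots,v_m]$ then delivers the claimed equality.
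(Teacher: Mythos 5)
Your skeleton is the same as the paper's: represent each $v_i$ as a rational function $f_i/g_i$ in $x,e_1,\dots,e_s,h_1,\dots,h_k$, reduce the statement to implicitization via the graph ideal, and invoke Lemma~\ref{lem:trans} (plus the known relations $I(\theta_1^n,\dots,\theta_s^n)$) to control the kernel on the level of $\set K[x,e,h]$. The paper compresses all of this into a citation of the "classical fact'' that the relations among rational functions are obtained by eliminating $x,e,h$ from $\langle g_iv_i-f_i\rangle$; your proposal usefully unpacks that fact. But the point you yourself flag as the main obstacle is a genuine gap, and your proposed fix does not close it. From $g\notin\ker\Psi$ (indeed $\Psi(g)$ is eventually nonzero, hence a unit in $\mathcal S$) you may conclude that $g$ is a non-zerodivisor modulo $\ker\Psi$, but not modulo $J$; since $J\subseteq\ker\Psi$ is exactly what you are trying to upgrade to equality, the inference "multiplication by $g$ is injective modulo $J$'' is circular. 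And it can actually fail for sequences within the scope of the proposition: take $v_1(n)=v_2(n)=(2^n-1)/n!$, so $r_1=r_2=(e_1-1)/h_1$, $I(2^n)=0$, and $J=\langle h_1v_1-(e_1-1),\,h_1v_2-(e_1-1)\rangle$. Then $h_1(v_1-v_2)\in J$, yet $v_1-v_2\notin J$ (the evaluation $h_1\mapsto 0$, $e_1\mapsto 1$ kills both generators but not $v_1-v_2$), so $h_1$ is a zerodivisor modulo $J$ and your cancellation step breaks down precisely where it is needed, while $v_1-v_2$ clearly is an algebraic relation among the two (identical) sequences.

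To be fair, the paper's own one-line proof glosses over the same point: the implicitization statement it cites is, over the polynomial ring, only valid after saturating by the product of the denominators (equivalently, adding a new variable $t$ and the polynomial $1-t\,g_1\cdots g_m$ before eliminating), or under an extra hypothesis making each $g_i$ invertible modulo the graph ideal (e.g.\ all $g_i=1$, as in the paper's worked examples). So a complete argument along your lines should either (i) state and prove the elimination identity with the saturation $\bigl(I(\theta_1^n,\dots,\theta_s^n)+\langle g_iv_i-f_i\rangle\bigr):(g_1\cdots g_m)^\infty$ in place of the plain sum, which your "multiply by $g^N$'' reduction then matches exactly, or (ii) isolate conditions on the closed forms under which $g$ is a unit (or at least a non-zerodivisor) modulo $J$ and verify them; the non-vanishing of $\Psi(g)$ on $\set N$ alone is not such a condition. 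As a smaller remark, your identification $\ker\Psi_0=I(\theta_1^n,\dots,\theta_s^n)\cdot\set K[x,e,h]$ also uses, beyond Lemma~\ref{lem:trans}, that all relations among $n,\theta_1^n,\dots,\theta_s^n$ are generated by constant-coefficient relations among the exponentials; this is true and implicitly assumed by the paper as well, but it deserves a sentence.
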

\begin{proof}
  The proposition follows immediately from the fact that the ideal of algebraic
  dependencies among a set of rational functions
  \[\frac{r_1(x_1,\dots,x_k)}{d_1(x_1,\dots,x_k)},\dots,
    \frac{r_m(x_1,\dots,x_k)}{d_m(x_1,\dots,x_k)},\]
  in the polynomial ring $\set K[y_1,\dots,y_m]$ is given by
\begin{alignat*}2
& \langle
  d_1(x_1,\dots,x_k)y_1-r_1(x_1,\dots,x_k),\dots,\\
&\phantom{\langle}d_m(x_1,\dots,x_k)y_m-r_m(x_1,\dots,x_k)\rangle\cap
  \set K[y_1,\dots,y_m],
\end{alignat*}
and that by Lemma~\ref{lem:trans} there are no algebraic relations over the
field $\set K(n,\theta_1^n,\dots,\theta_s^n)$ among the terms
$(n-\tilde{\zeta}_i)^{\underline n}$ with $\tilde{\zeta}_i$ as above for
$i=1,\dots,k$.
\end{proof}

\begin{example}
  We compute the ideal of algebraic relations among $a,b,c,d$ given in
  Example~\ref{ex:mainex1}. First, we compute the ideal of algebraic relations
  among $(-1)^n,2^n, 3^n$ and $6^n$ with corresponding variables
  $e_{-1},e_2,e_3,e_6$.  We get
  \[I((-1)^n,2^n,3^n,6^n)=\langle e_{-1}^2 - 1, e_2e_3 - e_6\rangle.\]
  Now we can compute the ideal of algebraic relations among $a,b,c,d$ by adding
  the relations
  $a-(k_1e_{-1}-k_2e_6)f,k_1+k_2-a(0),-k_1+6k_2-a(1),b-b(0)e_3,c-c(0)e_2f,d-d(0)f$,
  where $f$ is used to model $n!$, and eliminate the variables
  $k_1,k_2,e_{-1},e_2,e_3,e_6$ and $f$.
  \begin{alignat*}2
  &I(a,b,c,d) = \\
  &\;(I(2^n,3^n,1+6^n) + \langle a-(k_1e_{-1}-k_2e_6)f,k_1+k_2-a(0),\\
  &\quad -k_1+6k_2-a(1),b-b(0)e_3,c-c(0)e_2f,d-d(0)f\rangle)\\
  &\quad \cap\set K[a,b,c,d] =\\
  &\; \langle d(0)^2((-7b(0)c(0)a + a(0)bc)^2+a(1)bc(bc(a(1)+2a(0))-\\
&\quad  14b(0)c(0)a)) -(b(0)c(0)d(-6a(0)+ a(1)))^2\rangle.
  \end{alignat*}
  For instance, with the starting values $a(0)=2,a(1)=5$ and $b(0)=c(0)=d(0)=1$
  we get the relation
  \[b^2c^2 - 2abc + a^2 - d^2,\]
  with 
  \[a=((-1)^n+6^n)n!,\quad b=2^n,\quad c=3^nn!,\quad d=n!.\]
\end{example}

\begin{remark}
  Proposition~\ref{prop:mainprop} can easily be turned into an algorithm with
  the help of Gröbner bases, which allow computing a set of generators for the
  sum of ideals and also the elimination of variables. While computationally
  demanding, the use of Gröbner bases is viable in part because of the highly
  optimized tools that are available in modern computer algebra systems and in
  part because, as observed empirically in our experiments, the polynomial
  systems arising in practice in this context are typically small and easy to
  compute.
\end{remark}

\section{Implementation}
\label{sec:aligator}
The techniques presented in this paper are implemented in the open source
Mathematica software package Aligator\footnote{Aligator requires the Mathematica
packages Hyper~\cite{hyper}, Dependencies~\cite{dependencies} and
FastZeil~\cite{fastzeil}, where the latter two are part of the compilation
package ErgoSum~\cite{ergosum}.}~\cite{aligator}, available for download at
\begin{quote}\url{https://ahumenberger.github.io/aligator/}\end{quote}
We give an illustrative example of the provided facilities.
\begin{example} We compute the ideal of algebraic relations among the program
variables $a,b,c,d,e,f$ as given in the following loop. The loop exhibits two
first-order and two second-order recurrence relations ($a,e$ and $b,d$ resp.),
which Aligator could not handle before. Furthermore we have two first-order C-finite
recurrence relations ($c,f$). 

\vspace{5px}
\newenvironment{subq}{\begin{list}{}{%
\setlength\leftmargin{-21px}}%
\item}
{\end{list}}

\newenvironment{subq2}{\begin{list}{}{%
\setlength\leftmargin{-17px}}%
\item}
{\end{list}}

\begin{subq}{}
  \begin{mmaCell}[]{Input}
  Aligator[
    WHILE[True,
      a := 3(n + \mmaFrac{3}{2})a;
      s1 := s2; s2 := b;
      b := 5(\mmaFrac{3}{2} + n)s2 - \mmaFrac{3}{2}(1 + 2n)(3 + 2n)s1;
      c := -3c + 2;
      t1 := t2; t2 := d;
      d := 4(4 + n)t2 - 3(3 + n)(4 + n)t1;
      e := (n + 4)e; 
      f := 2f],
    LoopCounter -> n,
    IniVal -> \{
      t1 := 1; t2 := 1;
      s1 := 1; s2 := 2;
      a := 3; b := 1;
      c := 1; d := 3;
      e := 2; f := 5\}]
\end{mmaCell}
\end{subq}

The input is given to \mmaInlineCell{Input}{Aligator} in form of a while
loop, and two optional arguments: \mmaInlineCell{Input}{LoopCounter}
(default: \mmaInlineCell{Input}{i}) and \mmaInlineCell{Input}{IniVal} (default:
\mmaInlineCell{Input}{\{\}}). The former is for specifying which variable within
the loop corresponds to the loop counter, whereas the latter is for specifying
the initial values of the program variables. If no initial values are given,
then the invariants contain the starting values in the form of
\mmaInlineCell{Input}{a[0]}, representing the initial value of $a$.

The following output of \mmaInlineCell{Input}{Aligator} is a conjunction of the
elements of the Gröbner basis of the ideal of all algebraic relations among
$a,b,c,d,e$ and $f$. The loop counter is eliminated via Gröbner basis
computation.

\vspace{5px}
\begin{subq2}{}
\begin{mmaCell}{Output}
  2 d == 3 e && 
  (a + b) (2 d - 3 e) == 0 && 
  2 a d f == 3 a e f && 
  450 a b \mmaSup{(1 - 2 c)}{2} + 225 \mmaSup{b}{2} \mmaSup{(1 - 2 c)}{2} + 
    \mmaSup{a}{2} (225 \mmaSup{(1 - 2 c)}{2} - 16 \mmaSup{f}{2}) == 0
\end{mmaCell}
\end{subq2}
\end{example}

Note that the second and third invariant are consequences of the first one. By
setting the option \mmaInlineCell{Input}{GroebnerReduce -> True} a reduced
Gröbner basis is computed which does not contain redundant elements.

\section{Conclusion and Future Work}
\label{sec:conclusion}
We extended the class of P-solvable loops to include sums and products of
hypergeometric and C-finite sequences. This was made possible by identifying
algebraically independent factors in hypergeometric terms and then viewing the
sequences in question as rational function sequences over a transcendental field
extension. The implementation in Mathematica underlines the practicality of the
approach.

There are several promising directions in which we plan to expand this line of
research. Obviously, it is very desirable to include more types of recurrences
in P-solvable loops. These include further subclasses of the class of
holonomic sequences as well as partial and non-linear recurrence equations. It is
advisable to conduct a careful study on which kind of recurrences are relevant
in practice and also good-natured from a mathematical perspective. Uncoupling
techniques for systems of recurrence equations can also prove to be helpful in
this context.

Another possible extension is to consider nested loops. With the help of
$\Pi\Sigma^*$-theory~\cite{carsten2}, it might be possible to derive invariants
for the outermost loop, although the inner loops are not P-solvable by
themselves.

\bibliographystyle{ACM-Reference-Format} \bibliography{references}
\end{document}